\newcommand{\sparql}{\mathrm{SPARQL}} 
\newcommand{\spa}{\mathrm{GrAL}} 
\newcommand{\empw}[1]{#1}
\newcommand{\A}{{\mathit{A}}} 
\newcommand{\V}{{\mathit{V}}}
\newcommand{\B}{{\mathit{B}}}
\newcommand{\I}{{\mathit{I}}} 
\newcommand{\IB}{{\mathit{IB}}}
\newcommand{\IBV}{{\mathit{IBV}}} 
\newcommand{\BV}{{\mathit{BV}}} 
\newcommand{\dom}{\mathit{Dom}}
\newcommand{\Card}{\mathit{Card}}
\newcommand{\To}{\Rightarrow}
\newcommand{\blank}[1]{\_\colon\!#1}
\newcommand{\ulm}{{\underline{m}}}  
\newcommand{\uln}{{\underline{n}}}
\newcommand{\ulp}{{\underline{p}}}
\newcommand{\im}{\mathit{Im}} 
\newcommand{\gr}{\mathit{Gr}} 
\newcommand{\true}{\mathit{true}} 
\newcommand{\false}{\mathit{false}} 
\newcommand{\Expr}{\mathit{Expr}} 
\newcommand{\Val}{\mathit{Val}} 
\newcommand{\ev}{\mathit{eval}}
\newcommand{\var}{\mathit{var}} 
\newcommand{\expr}{\mathit{expr}} 
\newcommand{\as}{\approx} 
\newcommand{\se}[1]{{[#1]}} 
\newcommand{\sem}[2]{{[[#1]]_{#2}}} 
\newcommand{\seg}[2]{{{#2}^{(#1)}}} 
\newcommand{\hsp}{\null\hspace{1.5pc}}
\newcommand{\construct}[2]{{\rm CONSTRUCT\;}#1{\rm\;WHERE\;#2}}
\newcommand{\select}[2]{{\rm SELECT\;}#1{\rm\;WHERE\;#2}}
\newcommand{\distinct}[2]{{\rm SELECT\;DISTINCT\;}#1{\rm\;WHERE\;#2}}
\newtheorem{example}{Example}
\newtheorem{remark}{Remark}
\newtheorem{definition}{Definition}
\newtheorem{note}{Notation}
\newtheorem{proposition}{Proposition}
\title{All You Need Is CONSTRUCT}
\date{}
\author[1]{Dominique Duval}
\author[2]{Rachid Echahed}
\author[2]{Fr\'ed\'eric Prost}
\affil[1]{LJK, Univ. Grenoble Alpes and CNRS, France} 
\affil[2]{LIG, Univ. Grenoble Alpes and CNRS, France}
\begin{document}

\maketitle

\begin{abstract}
In SPARQL, the query forms SELECT  and CONSTRUCT have been the subject
of  several studies,  both  theoretical and  practical.  However,  the
composition  of  such  queries  and their  interweaving  when  forming
involved  nested queries  has not  yet received  much interest  in the
literature.  We mainly  tackle the problem of  composing such queries.
For  this purpose,  we  introduce  a language  close  to SPARQL  where
queries can  be nested at  will, involving either CONSTRUCT  or SELECT
query forms and provide a formal  semantics for it.  This semantics is
based on a  uniform interpretation of queries. This  uniformity is due
to an extension of the notion  of RDF graphs to include isolated items
such  as  variables. As  a  key  feature of  this  work,  we show  how
classical SELECT queries can be easily encoded as a particular case of
CONSTRUCT queries.
\end{abstract}

\section{Introduction}
\label{intro}

Graph databases \cite{Robinson2013} are becoming ubiquitous in our
society. The success of this recent trend in the organization of data
stems from different scientific, technological and societal
factors. There are different ways to encode data in terms of graphs as proposed
 in the literature, see e.g., RDF graphs \cite{rdf} or
Property graphs \cite{RNpropertyGraphs}. Various query languages can
be associated to each data graph representation. In this paper, we
consider the W3C standards, namely RDF \cite{Robinson2013} formalism
to represent data graphs and its associated query language SPARQL
\cite{sparql}.

An RDF graph is defined as a set of \empw{RDF
triples}, where an RDF triple has the form
$(\empw{subject},\empw{predicate},\empw{object})$. The
\empw{subject} is either an \empw{IRI} (Internationalized Resource
Identifier) or a \empw{blank node}, the \empw{predicate} is an IRI and
the \empw{object} is either an IRI, a literal (denoting a value such as
a string, a number or a date) or a  \empw{blank node}. 

Notice that a predicate in an RDF triple cannot be a blank. For
example, a triple such as $(Paul, blank_{rel},Henry)$ standing
for ``there is some relationship between Paul and Henry''
is not allowed in RDF, but only in generalized RDF
\cite[Section~7]{rdf}. Following the theoretical point of view we
propose in this paper, there is no harm to consider blank
predicates within RDF triples. We thus consider \emph{data graphs} in a more
general setting including RDF graphs.


The language SPARQL, which is the standard query language associated
to RDF, features different query forms such as SELECT 
or CONSTRUCT forms, among others. Besides the W3C specifications of
SPARQL \cite{sparql}, different authors investigated formal semantics
of the language \cite{PerezAG09,Schmidt0L10}. The semantics
associated to SPARQL queries are not uniform in general. Indeed, for
instance the result of a SELECT query  is a multiset of mappings \cite{KKC}
while the result of a CONSTRUCT query
is a data graph \cite{Kostylevetal2015}. Because of these differences
between the semantics of the different forms of queries, building
nested queries becomes a bit cumbersome.

However, the need of nested queries as a feature of query languages is
well known \cite{Kim82} and nested SPARQL queries have already received
some interest in the litterature. For example, in \cite{KKC}, nesting
SELECT queries has been investigated throughoutly but CONSTRUCT
queries have not been considered ; while in \cite{PolleresRK16,AnglesG11},
either SELECT or CONSTRUCT queries can be nested but due to the chosen
semantics the FROM clause is required to nest CONSTRUCT queries.

In this paper, we consider query nesting for a core language close to
SPARQL and propose a new unified semantics for the main query forms
SELECT and CONSTRUCT. For this purpose, we clearly distinguish between
the evaluation of a query and its result. The evaluation of query over
a data graph is a set of mappings.  Here a mapping should be
understood algebraically as a graph homomorphism and not as a simple
assignement of variables.  The result of query is obtained by simply projecting
the right answer as a multiset of assignements of variables or as a
data graph according to the form of the query.
 From such semantics, one can compose queries of
 different forms to build involved nested subqueries.

\begin{example}
To illustrate briefly our proposal, we consider Example~1 in 
\cite{AnglesG11} and reformulate it in our framework without using FROM
clauses. In this example, one looks for emails of pairs of co-authors.

\begin{tabbing}
  012\=34\=56\=78\=90\=12\=34\=56\=7890123456789 \kill
  \> SELECT ?Mail1 ?Mail2 \\
  \>\>  WHERE \\
  \>\>\> $\{ \; \{$ CONSTRUCT $\{$ ?Aut1 co-author ?Aut2. $\}$ \\
  \>\>\>\> WHERE \\
  \>\>\>\>\>  $\{ \; \{$ ?Art bib:has-author ?Aut1 . ?Art bib:has-author
      ?Aut2.  $\}$  \\   
  \>\>\>\>\>   \phantom{$\{ \; \{$} FILTER ( !(?Aut1 = ?Aut2))   $\}$ \\
  \>\>\>\>\> $\}$  \\
  \>\>\> AND \\     
  \>\>\> $\{$ ?Per1 co-author ?Per2 .
    ?Per1 foaf:mbox ?Mail1 .\\
  \>\>\>  ?Per2 foaf:mbox ?Mail2 .
  $\}$ \\
\>\>\> $\}$

\end{tabbing}
\end{example}

In order to build such a uniform semantics we had to extend the notion
of RDF graphs to include isolated items.  As a key feature of this
work, we show how SELECT queries can be easily encoded as a
particular case of CONSTRUCT queries.


The paper is organized as follows. In the next section, we introduce
the main operators of a query graph algebra which are used later on
when investigating  the semantics of the proposed graph query
language. In Section~\ref{sec:spa}, a SPARQL-like language called
GrAL is introduced where queries are defined as specific  patterns.
This language is defined by its syntax and semantics together with
some illustrating examples. Concluding remarks and future work are given in
Section~\ref{conclusion}.

\section{The Graph Query Algebra} 
\label{sec:frame}

The Graph Query Algebra is a family of operations which are used in
Section~\ref{sec:spa} for defining the evaluation of queries in the 
Graph Algebraic Query Language $\spa$.
First mappings are introduced in Section~\ref{ssec:frame-graphs},
then operations for combining
sets of mappings are defined in Section~\ref{ssec:frame-operation}.
It is usual to describe the evaluation of queries in $\sparql$
in terms of mappings from variables to RDF terms, following \cite{PerezAG09}. 
In this paper, more precisely, we consider each mapping as a
morphism between graphs. 

\subsection{Sets of mappings}\
\label{ssec:frame-graphs}
 
\begin{definition}[graph on $\A$]
\label{def:frame-graph}
For any set $\A$ and any element $t=(s,p,o)$ in $\A^3$, 
the elements $s$, $p$ and $o$ are called respectively 
the \emph{subject}, the \emph{predicate} and the \emph{object} of $t$.
A \emph{graph} $X$ on $\A$ is made of a subset $X_N$ of $\A$
called the set of \emph{nodes} of $X$ and a subset $X_T$ of $\A^3$ 
called the set of \emph{triples} of $X$, such that the subject and the object
of each triple of $X$ is a node of $X$.
The nodes of $X$ which are neither a subject nor an object are called
the \emph{isolated nodes} of $X$.
The set of \emph{labels} of a graph $X$ on $\A$ is the subset $\A(X)$
of $\A$ made of the nodes and predicates of $X$.
\end{definition}

\begin{remark}
\label{rem:frame-graph}
Given two graphs $X_1$ and $X_2$ on $\A$ their \emph{union} $X_1\cup X_2$
is defined by $(X_1\cup X_2)_N = (X_1)_N \cup (X_2)_N$ and
$(X_1\cup X_2)_T = (X_1)_T \cup (X_2)_T$,
and similarly their \emph{intersection} $X_1\cap X_2$
is defined by $(X_1\cap X_2)_N = (X_1)_N \cap (X_2)_N$ and
$(X_1\cap X_2)_T = (X_1)_T \cap (X_2)_T$.
It follows that $\A(X_1\cup X_2)=\A(X_1) \cup \A(X_2)$ and
$\A(X_1 \cap X_2) = \A(X_1)\cap \A(X_2)$.
\end{remark}

\begin{definition}[morphism of graphs on $\A$]
\label{def:frame-morphism}
Let $X$ and $Y$ be graphs on a set $\A$.  
A \emph{morphism} $f$ (of graphs on $\A$) from $X$ to $Y$, denoted $f:X\to Y$, 
is a partial function from $\A(X)$ to $\A(Y)$
which \emph{preserves nodes} and \emph{preserves triples},
in the following sense.
Let $\dom(f)$ be the \emph{domain of definition} of $f$,
i.e., the subset of $\A(X)$ where the partial function $f$ is defined.
Then $f$ \emph{preserves nodes} if $f(n)\in Y_N$
for each $n \in X_N\cap\dom(f)$ 
and $f$ \emph{preserves triples} if $f^3(t)\in Y_T$ 
for each $t \in X_T\cap\dom(f)^3$.
Then $f_N:X_N\to Y_N$ and $f_T:X_T\to Y_T$ are the partial functions
restrictions of $f$ and $f^3$, respectively. 
Note that when $n$ is an isolated node of $X$ then the node $f(n)$ does not
have to be isolated in $Y$.
The \emph{domain} of a morphism $f:X\to Y$ is $X$ and its \emph{range} is $Y$.
A morphism $f:X\to Y$ \emph{fixes} a subset $C$ of $\A$
if $f(x)=x$ for each $x$ in $C\cap\A(X)$. 
Then the partial function $f$ is determined by its restriction to
$\A(X)\setminus C$. 
An \emph{isomorphism} of graphs on $\A$  
is a morphism $f:X\to Y$ of graphs on $\A$ that is invertible,
which means that both
$f_N:X_N\to Y_N$ and $f_T:X_T\to Y_T$ are bijections.
\end{definition}

\begin{definition}[image] 
\label{def:frame-image}
The \emph{image} of a graph $X$ by any partial function $f$ from
$\A(X)$ to $\A$ is the graph made of
the nodes $f(n)$ for $n\in X_N\cap\dom(f)$ and the triples $f^3(t)$ 
for $t\in X_T\cap\dom(f)^3$.
It is also called the image of $f$
and it is denoted either $\im(f)$ or $\im(X)$ when $f$ is clear
from the context.
Thus each partial function $f$ from $\A(X)$ to $\A$ is a morphism
of graphs on $\A$ from $X$ to $f(X)$.
\end{definition}

\begin{definition}[labels] 
\label{def:frame-labels} 
{F}rom now on,
the set $\A$ of labels of graphs is built from three disjoint countably
infinite sets $\I$, $\B$ and $\V$, 
called respectively the sets of \emph{resource identifiers}, 
\emph{blanks} and \emph{variables}.
We denote $\IB=\I\cup \B$, $\BV=\B\cup \V$ and $\IBV=\I\cup \B\cup \V$.
For each graph $X$ on $\IBV$ and each subset $Y$ of $\IBV$, 
the set $\A(X)\cap Y$ of labels of $X$ which belong to $Y$ is denoted $Y(X)$. 
\end{definition}

\begin{definition}[data and query graph, mapping]
\label{def:frame-mapping}
\emph{Data graphs} are finite graphs on $\IB$ and 
\emph{query graphs} are finite graphs on $\IBV$.
Thus each data graph can be seen as a query graph. 
A \emph{mapping} $m$ from a query graph $X$ to a data graph $Y$,
denoted $m:X\to Y$, 
is a morphism of query graphs from $X$ to $Y$ that fixes $\I$.
\end{definition}

\begin{remark}
\label{rem:frame-data-query} 
Intuitively, the resource identifiers are the ``constants'', 
that are fixed by morphisms, while both the blanks and variables are
the ``variables'', that may be instantiated.
It is only in construct queries (Section~\ref{ssec:spa-pattern})
that blanks and variables play truly distinct roles.
Thus, the precise symbol used for representing a blank or a variable does
not matter: a data graph is defined ``\emph{up to blanks}''
and a query graph ``\emph{up to blanks and variables}'',
and some care is required when several data or query graphs are in
the context. 
\end{remark}

\begin{remark}
\label{rem:frame-mapping}
Definition~\ref{def:frame-mapping}
means that a mapping $m:X\to Y$ is a partial function from $\IBV(X)$
to $\IB(Y)$ that fixes $\I$ and that preserves nodes and triples.
Thus, if there is a mapping from $X$ to $Y$ then $\I(X)\subseteq\I(Y)$. 
Each mapping $m:X\to Y$ determines a partial function $\mu:\BV\to\IB$,
defined by $\mu(x)=m(x)$ when $x\in\BV(X)$ and $\mu(x)$ is undefined
when $x\in\BV\setminus\BV(X)$. 
Conversely,
each partial function $\mu:\BV\to\IB$ can be extended as $\mu:\IBV\to\IB$
such that $\mu(x)=x$ for each $x\in\I$;
if $\mu:\IBV\to\IB$ preserves nodes and triples from $X$ to $Y$
then $\mu$ determines a mapping $m:X\to Y$,
defined by $m(x)=\mu(x)$ for each $x\in\IBV(X)$. 
In \cite{PerezAG09} and in subsequents papers like \cite{Kostylevetal2015,KKC}
a \emph{solution mapping}, or simply a \emph{mapping}, is a 
partial function $\mu:\V\to\IB$;
since it is assumed in these papers that patterns are blank-free, 
such mappings are related to our mappings in the same way as above,
by extending $\mu$ as $\mu:\IBV\to\IB$ by $\mu(x)=x$ for each $x\in\IB$.
\end{remark}

\begin{definition}[set of mappings]
\label{def:frame-set-of-mappings}
Let $X$ be a query graph and $Y$ a data graph. 
A \emph{set of mappings from $X$ to $Y$}, denoted $\ulm:X\To Y$,
is a finite set of mappings $m:X\to Y$. 
The \emph{domain} of $\ulm:X\To Y$ is $X$, its \emph{range} is $Y$,
and its \emph{image} is the subgraph of $Y$ union of the images of
the mappings in $\ulm$. 
\end{definition}

\begin{remark}[table] 
\label{rem:frame-table} 
A set of mappings $\ulm:X\To Y$ can be
represented as a \emph{table} $T(\ulm)$ made of one line
for each mapping $m$ in $\ulm$ and one column for each $x\in\BV(X)$, 
with the entry in line $m$ and column $x$ equal to $m(x)\in\IB(Y)$ when it is
defined and $\bot$ otherwise.
The order of the rows and columns of $T(\ulm)$ is arbitrary.
Note that $\ulm$ is determined by the table $T(\ulm)$ together with
$X$ and $Y$, but in general $\ulm:X \To Y$ cannot be recovered from
$T(\ulm)$ alone.

\end{remark}

\begin{definition}[compatible mappings] 
\label{def:frame-compatible}
Two mappings $m_1:X_1\to Y_1$ and $m_2:X_2\to Y_2$ are \emph{compatible},
written as $m_1\sim m_2$, if $m_1(x)=m_2(x)$ for each $x\in\BV(X_1)\cap\BV(X_2)$.
This means that for each $x\in \BV(X_1)\cap\BV(X_2)$, 
$m_1(x)$ is defined if and only if $m_2(x)$ is defined and then $m_1(x)=m_2(x)$. 
Given two compatible mappings $m_1:X_1\to Y_1$ and $m_2:X_2\to Y_2$, 
there is a unique mapping
$m_1\bowtie m_2 : X_1\cup X_2 \to Y_1\cup Y_2$ 
such that $m_1\bowtie m_2\sim m_1$ and $m_1\bowtie m_2\sim m_2$,
which means that $m_1\bowtie m_2$ coincides with $m_1$ on $X_1$
and with $m_2$ on $X_2$. 
\end{definition}

\begin{remark}[About RDF and SPARQL] 
\label{rem:frame-rdf}
When dealing with RDF and SPARQL \cite{rdf,sparql} 
the set $\I$ is the disjoint union of the set
of \emph{IRI}s (Internationalized Resource
Identifiers) and the set of \emph{literal}s. 
An \emph{RDF graph} is a set of triples on $\IB$, that is, a graph on $\IB$
without isolated node, 
where all predicates are IRIs and only objects can be literals.
Thus an \emph{isomorphism} of RDF graphs, as defined in \cite{rdf},
is an isomorphism of graphs on $\IB$ as in Definition~\ref{def:frame-morphism}.
The set of \emph{RDF terms} of an RDF graph $X$ is the set $\IB(X)$. 
Similarly a \emph{basic graph pattern} of SPARQL is a set of triples on $\IBV$ 
where all predicates are IRIs or variables and only objects can be literals.
Thus data graphs and query graphs generalize RDF graphs
and basic graph patterns, respectively. 
\end{remark} 

\subsection{Operations on sets of mappings}\  
\label{ssec:frame-operation}

In this Section we define some elementary transformations between
sets of mappings.

\begin{remark}[expressions and values]
\label{rem:frame-expr}  
We assume the existence of a set $\Expr$ of \emph{expressions}
with subsets $\V(\expr)$ of $\V$ and $\B(\expr)$ of $\B$ for each
expression $\expr$.
For each query graph $X$ the \emph{expressions on $X$}
are the expressions $\expr$ such that $\V(\expr)\subseteq \V(X)$ and
$\B(\expr)\subseteq \B(X)$.
We assume that there is a subset $\Val$ of $\I\cup\{\bot\}$
called the set of \emph{values}
and that for each mapping $m:X\to Y$ and each expression $\expr$ on $X$ 
there is a value $m(\expr)\in\Val$.
We assume that the boolean values $\true$ and $\false$ are in $\Val$, 
as well as the numbers and strings. 
\end{remark}

The first transformation on sets of mappings is the fundamental join operation. 

\begin{definition}[join]
\label{def:frame-join}
Given two sets of mappings $\ulm_1:X_1\To Y_1$ and $\ulm_2:X_2\To Y_2$, 
the \emph{join} of $\ulm_1$ and $\ulm_2$ is the 
set of mappings $\mathit{Join}(\ulm_1,\ulm_2) : X_1\cup X_2 \To Y_1\cup Y_2$
made of the mappings $m_1\bowtie m_2$ for all compatible mappings 
$m_1\in\ulm_1$ and $m_2\in\ulm_2$:
\\ \hsp $\mathit{Join}(\ulm_1,\ulm_2) = 
\{ m_1\bowtie m_2 \mid m_1\in\ulm_1 \wedge m_2\in\ulm_2 \wedge m_1\sim m_2\} :
X_1\cup X_2 \To Y_1\cup Y_2$.
\end{definition}

Subsets of a set of mappings can be defined by a filter operation.

\begin{definition}[filter]
\label{def:frame-filter}
Let $\ulm:X\To Y$ be a set of mappings and let $\expr$ be an expression on $X$. 
The \emph{filter of $\ulm$ by $\expr$} 
is the set of mappings $m$ in $\ulm$ where $m(\expr)=\true$:
\\ \hsp $\mathit{Filter}(\ulm,\expr) =
\{ m \mid m\in\ulm \wedge m(\expr)=\true\} :
X \To Y$. 
\end{definition}

Given a mapping $m:X\To Y$ and a query graph $X'$ contained in $X$,
let $m|_{X'}$ denote the restriction of $m$ to $X'$. 

\begin{definition}[restriction]
\label{def:frame-restriction}
Given a set of mappings $\ulm:X\To Y$ and a query graph $X'$ contained in $X$, 
the \emph{restriction of $\ulm$ to $X'$} is the set of 
mappings $\mathit{Restrict}_{X'}(\ulm): X' \to Y$ 
made of the restrictions $m|_{X'}$ of the mappings $m$ in $\ulm$ to $X'$:
\\ \hsp $\mathit{Restrict}(\ulm,X') = \{m|_{X'} \mid m\in\ulm \} : X'\To Y$.
\\
Since different mappings in $\ulm$ may coincide on $X'$,
the number of mappings in $\mathit{Restrict}_{X'}(\ulm)$ may be smaller
than the number of mappings in $\ulm$.
\end{definition}

\begin{note}[Notation] 
Given a mapping $m:X\to Y$ and a query graph $X'$ containing $X$,
there are several ways to extend $m$ as $m':X'\to Y\cup \im(X')$
where $\im(X')$ is the data graph image of $X'$ by $m'$. 
For instance, depending on the kind of labels in
$D = \IBV(X')\setminus\IBV(X)$: 
\begin{itemize}
\item 
For any $D$, $m$ can be extended as $m':X'\to Y\cup \im(m')$ such that 
$m'(x)=x$ for each $x\in D\cap\I$ and 
$m'(x)$ is undefined (denoted $m'(x)=\bot$) for each $x\in D\cap\BV$. 
This is denoted: 
\\ \hsp $m'=\mathit{Ext}_{\bot}(m,X') : X'\to Y\cup \im(m')$.
\item 
If $D \subseteq \IB$ then $m$ can be extended as
$m':X'\to Y\cup \im(m')$ such that 
$m'(x)=x$ for each $x\in D\cap\I$ and 
$m'(x)$ is a fresh blank for each $x\in D\cap\B$. 
This is denoted: 
\\ \hsp $m'=\mathit{Ext}_{\IB}(m,X') : X'\to Y\cup \im(m')$.
\item  
If $D$ is made of one variable $\var$ 
and $\expr$ is an expression on $X$ then $m$ can be extended as
$m':X'\to Y\cup\{m(\expr)\}$ such that $m'(\var)=m(\expr)$.
This is denoted: 
\\ \hsp $m'=\mathit{Ext}_{\var\as\expr}(m,X') : X'\to Y\cup\{\expr\}$. 
\end{itemize}
\end{note}

\begin{definition}[extension] 
\label{def:frame-extension}
Given a set of mappings $\ulm:X\To Y$ and a query graph $X'$ containing $X$,
let $D = \IBV(X')\setminus\IBV(X)$. We define the following extensions of
$\ulm$ as $\ulm':X'\To Y\cup \im(X')$ where $\im(X')= Y\cup \ulm'(X')$: 
\begin{itemize}
\item 
The \emph{extension of $\ulm$ to $X'$ by undefined functions} is: 
\\ \hsp $\mathit{Extend}_{\bot}(\ulm,X') = \{\mathit{Ext}_{\bot}(m,X')
\mid m\in\ulm\} : X'\To Y'$.
\item 
If $D \subseteq \IB$ then
the \emph{extension of $\ulm$ to $X'$ by fresh blanks} is: 
\\ \hsp $\mathit{Extend}_{\IB}(\ulm,X') = \{\mathit{Ext}_{\IB}(m,X')
\mid m\in\ulm\} : X'\To Y' $.
\item 
If $D=\{\var\}$ for a variable $\var$ 
and $\expr$ is an expression on $X$ then the
\emph{extension of $\ulm$ to $X'$ by binding $\var$ to the values of $\expr$}
is: 
\\ \hsp $\mathit{Extend}_{\var\as\expr}(\ulm,X') =
\{\mathit{Ext}_{\var\as\expr}(m,X')
\mid m\in\ulm\} : X'\To Y\cup\{\expr\}$. 
\end{itemize}
Note that the number of mappings in any extension of $\ulm$ is the same as
in $\ulm$.
\end{definition}

For defining the union of two sets of mappings, we first extend them
by undefined functions in such a way that they both get the same domain
and range.  

\begin{definition}[union]
\label{def:frame-union}
The \emph{union} $\mathit{Union}(\ulm_1,\ulm_2): X_1\cup X_2 \To Y_1\cup Y_2$
of two sets of mappings
$\ulm_1:X_1\To Y_1$ and $\ulm_2:X_2\To Y_2$ is the set-theoretic union of
their extensions to $X_1\cup X_2$ by undefined functions:
\\ \hsp $\mathit{Union}(\ulm_1,\ulm_2) = 
\mathit{Extend}_{\bot}(\ulm_1,X_1\cup X_2) \,\cup\,
\mathit{Extend}_{\bot}(\ulm_2,X_1\cup X_2) 
: X_1\cup X_2 \To Y_1\cup Y_2 $.
\end{definition}

Finally, we will use the well-known \emph{projection} operation
for building a multiset of mappings from a set of mappings.

\begin{definition}[projection]
\label{def:queries-projection}
The \emph{projection} of a set of mappings $\ulm:X\To Y$ 
to a subgraph $X'$ of $X$ is the \emph{multiset} of
mappings $\mathit{Project}(\ulm,X')$ with base set
$\mathit{Restrict}(\ulm,X'):X'\To Y$ and with multiplicity for each mapping
$m'$ the number of mappings $m\in\ulm$ such that $m'=m|_{X'}$.  
Thus the number of mappings in $\mathit{Project}(\ulm,X')$,
counting multiplicities, is always the same
as the number of mappings in $\ulm$.
\end{definition}

\section{The Graph Algebraic Query Language} 
\label{sec:spa}

In this Section we introduce the Graph Algebraic Query Language $\spa$.
Its syntax and semantics for expressions and patterns 
are defined in a mutually recursive way:
this is mainly due to the fact that expressions can be defined
from patterns, using the EXISTS and NOT EXISTS syntactic blocks. 
Syntactically,
the \emph{queries} of $\spa$ are seen as patterns from the beginning:
a query is a specific kind of pattern.
Semantically,
the \emph{value} of a pattern over a data graph is a set of mappings
(Section~\ref{ssec:spa-pattern}).
In addition, when a pattern is a query then its \emph{result}
can be derived from its value:
the result of a contruct-query is a data graph,
the result of a select-distinct-query is a set of mappings,
and the result of a select-query is a multiset of mappings,
as in $\sparql$ (Section~\ref{ssec:spa-query}).

\subsection{Expressions, patterns and queries}\  
\label{ssec:spa-pattern} 

A \emph{basic expression} is defined as usual from constants
(numbers, strings, boolean values) and variables (and blanks,
which act as variables here), using formal operations
like $+$, $-$, $\mathit{concat}$, $>$, $\wedge$,...
The \emph{basic expressions on $X$} are defined as in
Remark~\ref{rem:frame-expr}. 

\begin{definition}[Syntax of expressions]
\label{def:spa-syn-expr}
An \emph{expression} $\expr$ in the language $\spa$ is either a
basic expression or an expression of the form
EXISTS $P_1$ or NOT EXISTS $P_1$ for some pattern $P_1$, 
which are expressions on $X$ for every query graph $X$.
\end{definition}

\begin{definition}[Syntax of patterns]
\label{def:spa-syn-pattern} 
A \emph{pattern} $P$ in the language $\spa$ is defined inductively as follows.
\begin{itemize}
\item A query graph is a pattern, called a \emph{basic pattern}. 
\item If $P_1$ and $P_2$ are patterns then the following are patterns: 
\begin{itemize}
\item[] $P_1{\rm\;AND\;}P_2$
\item[] $P_1{\rm\;UNION\;}P_2$
\end{itemize}
\item If $P_1$ is a pattern and $\expr$ an expression on $P_1$ 
then the following is a pattern: 
\begin{itemize}
\item[] $P_1{\rm\;FILTER\;}\expr$ 
\end{itemize}
\item If $P_1$ is a pattern, $\expr$ an expression on $P_1$ 
and $\var$ a fresh variable then the following is a pattern: 
\begin{itemize}
\item[] $P_1{\rm\;BIND\;}(\expr{\rm\;AS\;}\var)$
\end{itemize}
\item If $P_1$ is a pattern and $R$ a query graph
such that $\V(R)\subseteq\V(P_1)$ 
then the following is a pattern: 
\begin{itemize}
\item[] $\construct{R}{P_1}$
\end{itemize}
\item If $P_1$ is a pattern and $S$ a finite set of variables 
such that $S\subseteq\V(P_1)$ 
then the following are patterns: 
\begin{itemize}
\item[] $\distinct{S}{P_1}$
\item[] $\select{S}{P_1}$
\end{itemize}
\end{itemize}
\end{definition}

The semantics of
expressions and patterns are defined in a mutually recursive way.
The value of an expression $\expr$ on $X$ with respect to
a set of mappings $\ulm:X\To Y$ is a family
$\ev(\ulm,\expr)= (m(\expr))_{m\in\ulm}$ of elements of $\Val$
(Definition~\ref{def:spa-sem-expr}). 
The value of a pattern $P$ over 
a data graph $G$ is a set of mappings $\sem{P}{G}:\se{P}\To\seg{P}{G}$
from a query graph $\se{P}$ depending only on $P$ 
to a data graph $\seg{P}{G}$ that contains $G$
(Definitions~\ref{def:spa-sem-pattern} and~\ref{def:spa-sem-query}).

\begin{definition}[Evaluation of expressions]
\label{def:spa-sem-expr}  
The \emph{value} of an expression $\expr$ on $X$ with respect to
a set of mappings $\ulm:X\To Y$ is the family
$\ev(\ulm,\expr)= (m(\expr))_{m\in\ulm}$ of elements of $\Val$
defined as follows: 
\begin{itemize}
\item If $\expr$ is a basic expression then $m(\expr)$ is the given value of
$\expr$ with respect to $m$.
\item If $\expr={\rm EXISTS\;}P_1$ then 
$m(\expr)$ is $\true$ if there is some $m_1\in\sem{P_1}{G}$ 
such that $m \sim m_1 $ and $\false$ otherwise.
\item If $\expr={\rm NOT\;EXISTS\;}P_1$ then $m(\expr)$ is the negation of
$m({\rm EXISTS\;}P_1)$.
\end{itemize}
\end{definition}

\begin{definition}[Equivalence of patterns]
\label{def:spa-equiv}
Two patterns are \emph{equivalent} if they have the same value
over $G$ for every data graph $G$, up to a renaming of blanks.
\end{definition}

\begin{definition}[Evaluation of non-query patterns]
\label{def:spa-sem-pattern}
The \emph{value} of a pattern $P$ of $\spa$ over 
a data graph $G$ is a set of mappings $\sem{P}{G}:\se{P}\To\seg{P}{G}$
from a query graph $\se{P}$ depending only on $P$
to a data graph $\seg{P}{G}$ that contains $G$.
Below is the first part of the recursive definition of the value 
of $P$ over $G$, the second part is given in Definition~\ref{def:spa-sem-query}.
\begin{itemize}

\item 
If $P$ is a basic pattern then $\se{P}=P$, $\seg{P}{G}=G$ and 
\\ \hsp $\sem{P}{G} : P\To G$ is the set of 
all total mappings from $P$ (as a query graph) to $G$.

\item If $P_1$ and $P_2$ are patterns then
\\ \hsp $\sem{P_1 {\rm \;AND\;}P_2}{G} =
\mathit{Join}(\sem{P_1}{G},\sem{P_2}{\seg{P_1}{G}}) :
\se{P_1}\cup\se{P_2} \To \seg{P_2}{(\seg{P_1}{G})}$.

\item If $P_1$ and $P_2$ are patterns then
  \\ \hsp $\sem{P_1 {\rm \;UNION\; } P_2}{G} =
  \mathit{Union} (\sem{P_1}{G}, \sem{P_2}{\seg{P_1}{G}}) :
\se{P_1}\cup\se{P_2} \To \seg{P_2}{(\seg{P_1}{G})}$. 

\item If $P_1$ is a pattern and $\expr$ an expression on $P_1$ then
\\ \hsp $\sem{P_1{\rm \;FILTER\; } \expr}{G} =
\mathit{Filter} (\sem{P_1}{G},\expr) :
\se{P_1} \To \seg{P_1}{G} $. 

\item If $P_1$ is a pattern, $\expr$ an expression on $P_1$ 
and $\var$ a fresh variable then 
\\ \hsp $\sem{P_1{\rm \;BIND\; }(\expr{\rm \;AS\; }\var)}{G} =
\mathit{Extend}_{\var\as\expr} (\sem{P_1}{G},\se{P_1} \cup \{\var\}) : $
\\ \hsp\hsp $\se{P_1} \cup \{\var\} \To
\seg{P_1}{G}\cup \{m(\expr)\mid m\in\sem{P_1}{G}\}$. 

\end{itemize}
\end{definition}

Definition~\ref{def:spa-sem-pattern} and Remark~\ref{rem:spa-sem-pattern}
are illustrated by Examples~\ref{ex:union} to~\ref{ex:bind}.

\begin{remark}
\label{rem:spa-sem-pattern}
  
Whenever $\sem{P_1}{G}=G$ then $\sem{P_1 {\rm \;AND\; }P_2}{G}$ 
and $\sem{P_1 {\rm \;UNION\; } P_2}{G}$ are symmetric in $P_1$ and $P_2$.
This is the case when the pattern $P$ contains no BIND, 
CONSTRUCT, SELECT DISTINCT or SELECT.
In particular, a pattern composed of basic patterns related by ANDs
is equivalent to the basic pattern union of its components.
But in general the data graph $\seg{P_1}{G}$ may be strictly larger than $G$, 
so that the semantics of $P_1{\rm \;AND\; }P_2$ and
$P_1{\rm \;UNION\; }P_2$ is not symmetric in $P_1$ and $P_2$.
The semantics of patterns in $\spa$ is a set semantics:
each set of mappings $\sem{P_1 {\rm \;UNION\; } P_2}{G}$
is a set, not a multiset.
However for select-queries it is possible to keep the multiplicities,
as explained in Remark~\ref{rem:spa-query-union}. 

The value of $P_1 {\rm \;FILTER \;EXISTS\; } P_2$ can be expressed
without mentioning expressions. Indeed, it follows from
Definition~\ref{def:spa-sem-pattern} that 
\\ \hsp $\sem{P_1 {\rm \;FILTER \;EXISTS\; } P_2}{G} =  
\mathit{Restrict}(\mathit{Join}(\sem{P_1}{G},\sem{P_2}{\seg{P_1}{G}}),\se{P_1})$.

In order to evaluate  $P_1{\rm \;BIND\; }(\expr{\rm \;AS\; }\var)$ over $G$,
the fresh variable $\var$ is added to the query graph $\se{P_1}$ as
an isolated node and the values $m(\expr)$ are added to the data graph
$G$ as nodes, which are isolated if they are not yet nodes of $G$. 

\end{remark}

\begin{definition}[Evaluation of query patterns]
\label{def:spa-sem-query}
Below is the second part of the recursive definition of the value 
of a pattern $P$ of $\spa$ over a data graph $G$. 
The first part is given in Definition~\ref{def:spa-sem-pattern}.

\begin{itemize}

\item If $P_1$ is a pattern and $R$ a query graph
such that $\V(R)\subseteq\V(P_1)$ 
then 
\\ \hsp $\sem{\construct{R}{P_1}}{G} =
\mathit{Restrict}( \mathit{Extend}_{\IB}(\sem{P_1}{G},\se{P_1}\cup R) , R) :
\\ \hsp\hsp R\To  \seg{P_1}{G} \cup \im(R)$.

\item If $P_1$ is a pattern and $S$ a finite set of variables 
such that $S\subseteq\V(P_1)$ 
then 
\\ \hsp $\sem{\distinct{S}{P_1}}{G} = 
\mathit{Restrict}(\sem{P_1}{G},S) :
\\ \hsp\hsp S\To G \cup \im(S)$.

\item If $P_1$ is a pattern and $S$ a finite set of variables
such that $S\subseteq\V(P_1)$ 
let $\gr(S)$ denote the query graph made of a fresh blank node $s$
and a triple $(s,p_\var,\var)$ for some chosen element $p_\var$ of $\I$ 
for each variable $\var$ in $S$, then 
\\ \hsp $\sem{\select{S}{P_1}}{G} = 
\mathit{Restrict}(\mathit{Extend}_{\IB}(\sem{P_1}{G},\se{P_1}\cup\gr(S)),\gr(S)):
\\ \hsp\hsp \gr(S)\To G \cup \im(\gr(S))$

\end{itemize}

\end{definition}

Definition~\ref{def:spa-sem-query} and Remark~\ref{rem:spa-sem-query}
are illustrated by Examples~\ref{ex:construct} to~\ref{ex:select}.

\begin{remark}
\label{rem:spa-sem-query}

In order to evaluate $Q=\construct{R}{P}$ over $G$  
one has to look for the mappings of $P$ in $\sem{P}{G}$,
then build a copy of $R$ for each such mapping and
finally merge these copies by duplicating in a suitable way the blanks of $R$.
The construction of the set of mappings $\ulp=\sem{Q}{G}: R \To \seg{Q}{G}$
from $\ulm=\sem{P}{G}:\se{P} \To \seg{P}{G}$ can be described as follows.
First a family of renaming functions $(d_m)_{m\in\ulm}$ is built,
such that each $d_m$ is an injective function from 
$\B(R)$ to the set of blanks which are fresh, i.e.,
the blanks which are not used anywhere in the context
(thus, specifically, not in $G$), 
and the functions $d_m$ have pairwise disjoint images.
For each $m$ the function $d_m$ is used for extending $m$ as the unique
mapping $n$ on $\se{P}\cup R$ such that $n(x)=m(x)$ for each $x\in\se{P}$ 
and $n(x)=d_m(x)$ for each $x\in\B(R)$. 
Then $\uln$ is restricted as $\ulp$ with domain $R$ 
by restricting each $n\in\uln$ to the subgraph $R$ of $\se{P}\cup R$.

\begin{figure}[ht]
$$\xymatrix@C=3pc{
  \ar@{}[rd]|{\mathit{Extend}_\IB}
  \se{P} \ar@{}[r]|{\subseteq} \ar@{=>}[d]_{\ulm} &
  \ar@{}[rd]|{\mathit{Restrict}}
  \se{P}\cup R \ar@{=>}[d]_{\uln} &
  R \ar@{}[l]|{\supseteq} \ar@{=>}[d]_{\ulp} \\
  \seg{P}{G} \ar@{}[r]|{\subseteq} &
  \seg{Q}{G} & \seg{Q}{G} \ar@{}[l]|{=} \\
}$$
\caption{Evaluation of a construct query.}
\label{fig:sem-construct}  
\end{figure}

For select-distinct queries, Definition~\ref{def:spa-sem-query} implies that
\\ \hsp\hsp $\distinct{S}{P_1} \equiv \construct{S}{P_1} $
\\ Indeed, the set of variables $S$ can be seen as a query graph
made of isolated nodes, all of them variables. Then the set 
$\IB(S)$ is empty and consequently $\uln = \ulm$:
the extension step is useless. 

For select-queries, Definition~\ref{def:spa-sem-query} implies that
\\ \hsp\hsp $\select{S}{P_1} \equiv \construct{\gr(S)}{P_1}$
\\ The set $\IB(\gr(S))$ is non-empty:
there is one blank $s$ in $\gr(S)$ and one element $p_\var$ of $\I$ for each
element $\var$ of $S$.
It follows that $\uln$ extends each $m\in\ulm$ with a fresh blank,
image of $s$, 
which can be seen as an identifier for each $m\in\ulm$.
When restricting $\uln$ for computing $\ulp$ this identifier is kept,
so that all mappings remain distinct.

\end{remark}

\subsection{Queries: value and result}\ 
\label{ssec:spa-query} 

Definition~\ref{def:spa-syn-pattern} says that each query is a pattern
and Definition~\ref{def:spa-sem-query} says that the value of a query is
its value as a pattern, so that it is always a set of mappings,
whatever the query form is.
But the \emph{result} of a query, which is defined as a by-product of its
evaluation, does depend on the query form:
it is a data graph for construct-queries, a set of mappings for
select-distinct-queries and a multiset of mappings for select-queries.

\begin{definition}[syntax of queries]
\label{def:spa-query-syn}
A \emph{query} in the language $\spa$ is a pattern of one of the following
forms, where $P$ is a pattern, $R$ a query graph and $S$ a finite set of
variables.
\begin{itemize}
\item $\construct{R}{P}$
\item $\distinct{S}{P}$
\item $\select{S}{P}$
\end{itemize}
\end{definition}

The value of a query $Q$ over a data graph $G$ in the language
$\spa$ is its value as a pattern (Definition~\ref{def:spa-sem-query}), 
it is the set of mappings $\sem{Q}{G} : \se{Q} \To \seg{Q}{G}$.
In addition, each query $Q$ has a \emph{result} over $G$,
which is defined below from its value $\sem{Q}{G}$ in a way
that depends on the form of the query.

\begin{definition}[result of queries] 
\label{def:spa-query-sem}
The \emph{result} of a query $Q$ over a data graph $G$
is defined from the value $\sem{Q}{G}: \se{Q} \To \seg{Q}{G}$ as follows:
\begin{itemize}
\item If $Q=\construct{R}{P}$ its result is the \emph{data graph}
image of $\se{Q}$ by $\sem{Q}{G}$: 
\\ \hsp  $\mathit{Result}(Q,G) = \im(\sem{Q}{G}) $. 
\item If $Q=\distinct{S}{P}$ its result is the \emph{set of mappings}
$\sem{Q}{G}$: 
\\ \hsp  $\mathit{Result}(Q,G) = \sem{Q}{G}:S \To \seg{Q}{G}$. 
\item If $Q=\select{S}{P}$ its result is the \emph{multiset of mappings}
projection of $\sem{Q}{G}$: 
\\ \hsp $\mathit{Result}(Q,G) = \mathit{Project}(\sem{Q}{G},S):
S \To \seg{Q}{G}$. 
\end{itemize}
\end{definition}

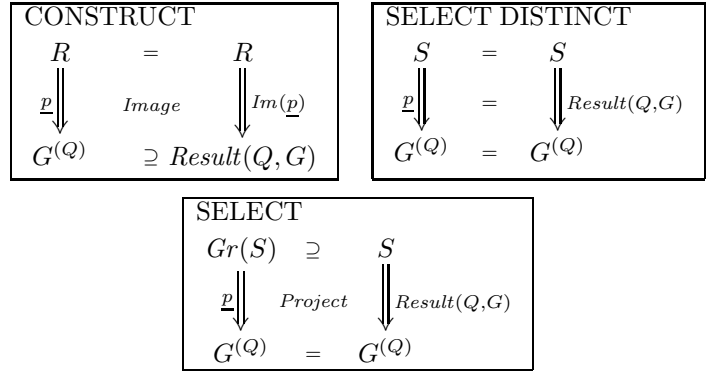
\begin{figure}[ht]
$$ \begin{array}{|l|l|l|}
  \cline{1-1}\cline{3-3}
{\rm CONSTRUCT} && {\rm SELECT\;DISTINCT}  \\ 
\xymatrix@C=2pc{
  \ar@{}[rd]|{\mathit{Image}}
  R \ar@{=>}[d]_{\ulp} &
  R \ar@{}[l]|{=} \ar@{=>}[d]^{\im(\ulp)} \\
  \seg{Q}{G} & 
  \mathit{Result}(Q,G) \ar@{}[l]|{\supseteq} \\
}
& \; &
\xymatrix@C=2pc{
  \ar@{}[rd]|{\qquad=\qquad}
  S \ar@{=>}[d]_{\ulp} & 
  S \ar@{}[l]|{=} \ar@{=>}[d]^{\mathit{Result}(Q,G)} \\
  \seg{Q}{G} & \seg{Q}{G} \ar@{}[l]|{=} \\ 
}\\
\cline{1-1}\cline{3-3}
\end{array}$$
$$
\begin{array}{|l|}
\cline{1-1}
{\rm SELECT}\\ 
\xymatrix@C=2pc{
  \ar@{}[rd]|{\mathit{Project}}
  \gr(S) \ar@{=>}[d]_{\ulp} &
  S \ar@{}[l]|{\supseteq} \ar@{=>}[d]^{\mathit{Result}(Q,G)} \\
  \seg{Q}{G} & \seg{Q}{G} \ar@{}[l]|{=} \\ 
}\\
  \cline{1-1}
\end{array}
  $$
  \caption{Result of queries.}
\label{fig:sem-result}  
\end{figure}

\begin{remark}[RDF and SPARQL] 
\label{rem:spa-query-rdf}
When $Q$ is a construct-query and the data graph $G$ is an RDF graph,
it may happen that the data graph $\mathit{Result}(Q,G)$ is not an RDF graph.
But the largest RDF graph included in $\mathit{Result}(Q,G)$
is the \emph{answer} to $Q$ over $G$ in the sense of
\cite[Section~5]{Kostylevetal2015}: this derives from the description
of $\mathit{Result}(Q,G)$ in Remark~\ref{rem:spa-sem-query}.
Using this Remark~\ref{rem:spa-sem-query} we also get a description of
the result of select-distinct-queries and select-queries that is the same as
in \cite[Section~2.3]{KKC}: 
For select-distinct-queries, the result is the set of mappings 
which consists of the restrictions of all mappings in $\sem{P}{G}$.
For select-queries, the result is the multiset of mappings
with base set the restrictions of all mappings $m$ in $\sem{P}{G}$,
each one with multiplicity the corresponding number of $m$'s. 
\end{remark}

\begin{proposition}[value]
\label{prop:spa-query-value}
For any query $Q$ with pattern $P$ and any data graph $G$, 
the number of mappings in $\sem{Q}{G}$ cannot be larger than
the number of mappings in $\sem{P}{G}$. 
\end{proposition}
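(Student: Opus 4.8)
The plan is to argue by cases on the three possible forms of a query $Q$ given in Definition~\ref{def:spa-sem-query}, relying on two monotonicity facts already recorded in the excerpt. First, the extension operation preserves cardinality: by the note closing Definition~\ref{def:frame-extension}, the set $\mathit{Extend}_{\IB}(\ulm,X')$ has exactly as many mappings as $\ulm$. Second, the restriction operation cannot increase cardinality: by the closing remark of Definition~\ref{def:frame-restriction}, distinct mappings may coincide on a subgraph, so $\mathit{Restrict}(\ulm,X')$ has at most as many mappings as $\ulm$. Writing $\ulm=\sem{P}{G}$ throughout, each query form expresses $\sem{Q}{G}$ as a restriction, possibly preceded by an $\mathit{Extend}_{\IB}$ step, applied to $\ulm$; composing the two bounds then yields $|\sem{Q}{G}|\leq|\ulm|$.

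I would dispatch the select-distinct case first, as it is the barest: Definition~\ref{def:spa-sem-query} gives $\sem{\distinct{S}{P}}{G}=\mathit{Restrict}(\ulm,S)$, so the claim is immediate from the restriction bound alone. The construct and select cases then follow the same pattern as one another. For construct, $\sem{\construct{R}{P}}{G}=\mathit{Restrict}(\mathit{Extend}_{\IB}(\ulm,\se{P}\cup R),R)$: the inner extension yields a set of exactly $|\ulm|$ mappings, and restricting to $R$ does not enlarge it. For select, $\sem{\select{S}{P}}{G}=\mathit{Restrict}(\mathit{Extend}_{\IB}(\ulm,\se{P}\cup\gr(S)),\gr(S))$, which has the identical shape with $\gr(S)$ in place of $R$, so the same two-step estimate applies. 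In all three cases $|\sem{Q}{G}|\leq|\sem{P}{G}|$, as required.

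There is no genuine obstacle here; the argument is pure bookkeeping, and the only care needed is to check that each query form really does reduce to an $\mathit{Extend}_{\IB}$ step followed by a $\mathit{Restrict}$ step, or to a lone $\mathit{Restrict}$, so that the two cited bounds apply verbatim. As a refinement worth recording, I would observe that the select case in fact preserves the count exactly: the fresh blank $s$ of $\gr(S)$ acquires a distinct image under $\mathit{Extend}_{\IB}$ for each mapping, and this image survives the restriction to $\gr(S)$, keeping the resulting mappings pairwise distinct, as already noted in Remark~\ref{rem:spa-sem-query}. By contrast the inequality can be strict for select-distinct, and also for construct when $R$ carries no blanks, since then the restriction may legitimately merge several mappings of $\ulm$ into one.
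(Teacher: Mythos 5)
Your proof is correct and follows essentially the same route as the paper: both rest on the two facts that $\mathit{Extend}_{\IB}$ preserves the number of mappings while $\mathit{Restrict}$ cannot increase it. The only cosmetic difference is that the paper first reduces the select and select-distinct cases to the construct case via the equivalences of Remark~\ref{rem:spa-sem-query} and then gives the extend-then-restrict bound once, whereas you treat the three query forms as separate (but identical in shape) cases.
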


\begin{proof}
Since select-queries and select-distinct-queries are equivalent 
to construct-queries with the same pattern, 
we may assume that $Q=\construct{R}{P}$
for a pattern $P$ and a query graph $R$. 
With the notations $\ulm=\sem{P}{G}$,
$\uln=\mathit{Extend}_{\IB}(\ulm,\se{P}\cup R)$
and $\ulp=\mathit{Restrict}(\uln,R)$, so that $\ulp=\sem{Q}{G}$, 
we know from Definitions~\ref{def:frame-restriction}
and~\ref{def:frame-extension} that $\Card(\ulp)\leq\Card(\uln)=\Card(\ulm)$.
\end{proof}

\begin{remark}[result]
\label{rem:spa-query-result}
In general the value of a construct-query 
cannot be deduced from its result alone.
However, for select-distinct-queries the value is the result
and for select-queries the value may be recovered fom the result
by choosing any fresh blanks as the images of the unique blank of $\gr(S)$.
\end{remark}

\begin{remark}[UNION and UNION ALL]
\label{rem:spa-query-union}
The union of two multisets $M_1$ and $M_2$, respectively based on the
sets $X_1$ and $X_2$, is usually defined as the multiset
$M$ based on the set $X_1 \cup X_2$ where the multiplicity of each element
is the sum of its multiplicities in $M_1$ and $M_2$.
When dealing with select-queries, the keyword UNION is used in $\sparql$
for the union as multisets. In $\mathrm{SQL}$
the union as multisets is obtained via the keyword UNION ALL, 
while UNION returns the union of the base sets. 
In $\spa$, the keyword UNION always returns a set of mappings.
In order to get the union as multisets of mappings we define UNION ALL
as follows,
with $S=\V(P_1)\cup \V(P_2)$: 
\\ \hsp $P_1 {\rm \;UNION\; ALL\; } P_2 = \\ \hsp \hsp
\{\select{S}{P_1}\} {\rm \;UNION\; } \{\select{S}{P_2}\} $.
\\ See Examples~\ref{ex:union} and~\ref{ex:union-all}. 
\end{remark}
  
\begin{remark}[subqueries]
\label{rem:spa-query-sub}
Since queries are specific patterns, they can be combined at will
between themselves and with other patterns, using the various 
syntactic building blocks for getting patterns. 
In particular, this provides various kinds of subqueries.
See Example~\ref{ex:sub}. 
Note that for computing the value or the result of a query,
one must use the value of each subquery, not its result.
\end{remark}

\subsection{Some examples}\  
\label{ssec:spa-examples}

In the examples we assume, as in RDF, that the set $\I$ is the disjoint union
of the set of IRIs and the set of literals, 
where the literals are strings, integers or boolean values.
The literals can be combined by the usual operations on
strings, integers and booleans.

We choose a concrete syntax that is similar to 
the syntax of $\sparql$. For instance a set of triples
$\{(s_1,p_1,o_1),(s_2,p_2,o_2)\}$ is written
\verb+ s1 p1 o1 . s2 p2 o2 . + and braces \verb+{ }+ are used
instead of parentheses $(\;)$.
The evaluation of a query $Q=\construct{R}{P}$ 
is illustrated as in Figure~\ref{fig:sem-construct},  
where each set of mappings $\ulm$ is described by its table  $T(\ulm)$,
as in Remark~\ref{rem:frame-table}: 

$$ \begin{array}{ccccc}
\begin{array}{|l|}
\hline
\se{P} \\
\hline
\end{array}
& \subseteq & 
\begin{array}{|l|}
\hline
\se{P}\cup R  \\
\hline
\end{array}
& \supseteq &
\begin{array}{|l|}
\hline
R \\ 
\hline
\end{array}
\\ \parallel &&\parallel && \parallel \\ 
\begin{array}{|l|}
  \hline
T(\ulm) \\
\hline
\end{array}
&&
\begin{array}{|l|l|l|}
  \hline
T(\uln) \\
\hline
\end{array}
&&
\begin{array}{|l|l|}
  \hline
T(\ulp) \\
  \hline
\end{array}
\\ \Downarrow && \Downarrow && \Downarrow \\ 
\begin{array}{|l|}
  \hline
\seg{P}{G} \\
  \hline
\end{array}
& \subseteq &
\begin{array}{|l|}
  \hline
\seg{Q}{G} \\
  \hline
\end{array}
& = &
\begin{array}{|l|}
  \hline
\seg{Q}{G} \\
  \hline
\end{array}
\end{array}
$$

\setlength\parindent{0em}

\begin{example}[CONSTRUCT]\
\label{ex:construct}

This example shows how blanks are handled,
whether they are in $G$ or in $R$.

\smallskip

\hfill 
\begin{minipage}{11pc}
\begin{Verbatim}[frame=single,label={Data G},fontsize=\footnotesize,framerule=1.2pt]
_:a employeeName "Alice" . 
_:a employeeId   12345 .
_:b employeeName "Bob" . 
_:b employeeId   67890 .
\end{Verbatim}
\end{minipage}
\hfill 
\begin{minipage}{12pc}
\begin{Verbatim}[frame=single,label={Query Q},fontsize=\footnotesize,framerule=1.2pt]
CONSTRUCT { ?x name _:z }
WHERE { ?x employeeName ?y }
\end{Verbatim}
\end{minipage}
\hfill\null

\medskip

$$ \begin{array}{ccccc}
\begin{array}{|l|}
\hline
\verb+?x employeeName ?y .+ \\
\hline
\end{array}
& \subseteq & 
\begin{array}{|l|}
\hline
\verb+?x employeeName ?y .+ \\
\verb+?x name _:z .+ \\
\hline
\end{array}
& \supseteq &
\begin{array}{|l|}
\hline
\verb+?x name _:z .+ \\
\hline
\end{array}
\\ \parallel &&\parallel && \parallel \\ 
\begin{array}{|l|l|}
  \hline
\verb+?x+  & \verb+?y+ \\
  \hline
\verb+_:a+ & \verb+"Alice"+\\
\verb+_:b+ & \verb+"Bob"+ \\ 
  \hline
\end{array}
&&
\begin{array}{|l|l|l|}
  \hline
\verb+?x+  & \verb+?y+ & \verb+_:z+ \\
  \hline
\verb+_:a+ & \verb+"Alice"+ & \verb+_:z1+ \\
\verb+_:b+ & \verb+"Bob"+ & \verb+_:z2+ \\ 
  \hline
\end{array}
&&
\begin{array}{|l|l|}
  \hline
\verb+?x+  & \verb+_:z+ \\
  \hline
\verb+_:a+ & \verb+_:z1+ \\
\verb+_:b+ & \verb+_:z2+ \\
  \hline
\end{array}
\\ \Downarrow && \Downarrow && \Downarrow \\ 
\begin{array}{|l|}
  \hline
\verb+G+ \\
  \hline
\end{array}
& \subseteq &
\begin{array}{|l|}
  \hline
\verb+G+ \;\cup \\
\verb+_:a name _:z1+ \\
\verb+_:b name _:z2+ \\
  \hline
\end{array}
& = &
\begin{array}{|l|}
  \hline
\verb+G+ \;\cup \\
\verb+_:a name _:z1+ \\
\verb+_:b name _:z2+ \\
  \hline
\end{array}
\end{array} $$

It follows that the result of $Q$ over $G$
is the data graph $\mathit{Result}(Q,G) $: 
\smallskip

\begin{center}
\begin{minipage}{7pc}
\begin{Verbatim}[frame=single,label={Result(Q,G)},fontsize=\footnotesize,framerule=1.2pt]
_:a name _:z1 .
_:b name _:z2 .
\end{Verbatim}
\end{minipage}
\end{center}

\end{example} 

\begin{example}[SELECT DISTINCT]\
\label{ex:distinct}

A select-distinct-query is equivalent to a construct-query. 

\smallskip

\hfill 
\begin{minipage}{15pc}
\begin{Verbatim}[frame=single,label={Data G},fontsize=\footnotesize,framerule=1.2pt]
_:a1  name  "Alice" .
_:a1  mbox  alice@example.com .
_:a2  name  "Alice" .
_:a2  mbox  asmith@example.com .
\end{Verbatim}
\end{minipage}
\hfill 
\begin{minipage}{10pc}
\begin{Verbatim}[frame=single,label={Query Q},fontsize=\footnotesize,framerule=1.2pt]
SELECT DISTINCT { ?y } 
WHERE { ?x name ?y }
\end{Verbatim}
\end{minipage}
\hfill\null 

\medskip

Equivalent construct-query:

\begin{center}
\begin{minipage}{10pc}
\begin{Verbatim}[frame=single,label={Query Q1},fontsize=\footnotesize,framerule=1.2pt]
CONSTRUCT { ?y } 
WHERE { ?x name ?y }
\end{Verbatim}
\end{minipage}
\end{center}
The value of $Q_1$ over $G$ is computed as in Example~\ref{ex:construct},
it is also the value of $Q$ over $G$: 
$$ \begin{array}{ccccc}
\begin{array}{|l|}
\hline
\verb+?x name ?y .+ \\
\hline
\end{array}
& = & 
\begin{array}{|l|}
\hline
\verb+?x name ?y .+ \\
\verb+?y .+ \\
\hline
\end{array}
& \supseteq &
\begin{array}{|l|}
\hline
\verb+?y .+ \\
\hline
\end{array}
\\ \parallel &&\parallel && \parallel \\ 
\begin{array}{|l|l|}
  \hline
\verb+?x+  & \verb+?y+ \\
  \hline
\verb+_:a1+ & \verb+"Alice"+\\
\verb+_:a2+ & \verb+"Alice"+\\
  \hline
\end{array}
&&
\begin{array}{|l|l|}
  \hline
\verb+?x+  & \verb+?y+ \\
  \hline
\verb+_:a1+ & \verb+"Alice"+\\
\verb+_:a2+ & \verb+"Alice"+\\
  \hline
\end{array}
&&
\begin{array}{|l|}
  \hline
\verb+?y+ \\
  \hline
\verb+"Alice"+\\
  \hline
\end{array}
\\ \Downarrow && \Downarrow && \Downarrow \\ 
\begin{array}{|l|}
  \hline
\verb+G+ \\
  \hline
\end{array}
& = &
\begin{array}{|l|}
  \hline
\verb+G+ \\
  \hline
\end{array}
& = &
\begin{array}{|l|}
  \hline
\verb+G+ \\
  \hline
\end{array}
\end{array} $$

It follows that the result of $Q$ over $G$
is the set of mappings with table: 
\smallskip

\begin{center}
\begin{minipage}{5pc}
\begin{Verbatim}[frame=single,label={Result(Q,G)},fontsize=\footnotesize,framerule=1.2pt]
?y
--------
"Alice"
\end{Verbatim}
\end{minipage}
\end{center}

\end{example} 

\begin{example}[SELECT]\
\label{ex:select}

A select-query is equivalent to a construct-query, using the query graph $\gr(S)$.

The data graph $G$ is the same as in Example~\ref{ex:distinct} and
the query $Q$ is a select-query, equivalent to the construct-query $Q_1$: 

\smallskip
\hfill 
\begin{minipage}{10pc}
  \begin{Verbatim}[frame=single,label={Query Q},fontsize=\footnotesize,framerule=1.2pt]
SELECT ?y
WHERE { ?x name ?y }
\end{Verbatim}
\end{minipage}
\hfill 
\begin{minipage}{10pc}
  \begin{Verbatim}[frame=single,label={Query Q1},fontsize=\footnotesize,framerule=1.2pt]
CONSTRUCT { _:s py ?y } 
WHERE { ?x name ?y }
\end{Verbatim}
\end{minipage}
\hfill\null

\smallskip
The value of $Q_1$ over $G$ is computed as in Examples~\ref{ex:construct}
and~\ref{ex:distinct},  
it is the value of $Q$ over $G$: 

$$ \begin{array}{ccccc}
\begin{array}{|l|}
\hline
\verb+?x name ?y .+ \\
\hline
\end{array}
& \subseteq & 
\begin{array}{|l|}
\hline
\verb+?x name ?y .+ \\
\verb+_:s py ?y .+ \\
\hline
\end{array}
& \supseteq &
\begin{array}{|l|}
\hline
\verb+_:s py ?y .+ \\
\hline
\end{array}
\\ \parallel &&\parallel && \parallel \\ 
\begin{array}{|l|l|}
  \hline
\verb+?x+  & \verb+?y+ \\
  \hline
\verb+_:a1+ & \verb+"Alice"+\\
\verb+_:a2+ & \verb+"Alice"+\\
  \hline
\end{array}
&&
\begin{array}{|l|l|l|}
  \hline
\verb+?x+  & \verb+?y+ & \verb+_:s+ \\
  \hline
\verb+_:a1+ & \verb+"Alice"+ & \verb+_:s1+ \\
\verb+_:a2+ & \verb+"Alice"+ & \verb+_:s2+ \\
  \hline
\end{array}
&&
\begin{array}{|l|l|}
  \hline
\verb+?y+ & \verb+_:s+ \\
  \hline
\verb+"Alice"+ & \verb+_:s1+ \\
\verb+"Alice"+ & \verb+_:s2+ \\
  \hline
\end{array}
\\ \Downarrow && \Downarrow && \Downarrow \\ 
\begin{array}{|l|}
  \hline
\verb+G+ \\
  \hline
\end{array}
& \subseteq &
\begin{array}{|l|}
  \hline
\verb+G+ \;\cup \\
\verb+_:s1 py "Alice" .+ \\
\verb+_:s2 py "Alice" .+ \\
  \hline
\end{array}
& = &
\begin{array}{|l|}
  \hline
\verb+G+ \;\cup \\
\verb+_:s1 py "Alice" .+ \\
\verb+_:s2 py "Alice" .+ \\
  \hline
\end{array}
\end{array} $$

It follows that the result of $Q$ over $G$
is the multiset of mappings with table: 
\smallskip

\begin{center}
\begin{minipage}{5pc}
\begin{Verbatim}[frame=single,label={Result(Q,G)},fontsize=\footnotesize,framerule=1.2pt]
?y
--------
"Alice"
"Alice"
\end{Verbatim}
\end{minipage}
\end{center}
\end{example} 

\begin{example}[UNION]\
\label{ex:union}

This example has to be compared with Example~\ref{ex:union-all}.

\medskip

\hfill 
\begin{minipage}{4.5pc}
\begin{Verbatim}[frame=single,label={Data G},fontsize=\footnotesize,framerule=1.2pt]
a b c .
\end{Verbatim}
\end{minipage}
\hfill 
\begin{minipage}{16pc}
\begin{Verbatim}[frame=single,label={Query Q},fontsize=\footnotesize,framerule=1.2pt]
SELECT ?x 
WHERE { { ?x ?y ?z } UNION { ?x ?y ?z } }
\end{Verbatim}
\end{minipage}
\hfill\null

\medskip

Definition~\ref{def:spa-sem-pattern} implies that
$P {\rm \;UNION\; }P \equiv P$ for any basic pattern $P$,
so that here the query $Q$ is equivalent to $Q_1$: 

\begin{center}
\begin{minipage}{9pc}
\begin{Verbatim}[frame=single,label={Query Q1},fontsize=\footnotesize,framerule=1.2pt]
SELECT ?x 
WHERE { ?x ?y ?z }
\end{Verbatim}
\end{minipage}
\end{center}

The evaluation of $Q_1$ over $G$ runs as follows: 

$$ \begin{array}{ccccc}
\begin{array}{|l|}
\hline
\verb+?x ?y ?z .+ \\
\hline
\end{array}
& \subseteq & 
\begin{array}{|l|}
\hline
\verb+?x ?y ?z .+ \\
\verb+_:s px ?x .+ \\
\hline
\end{array}
& \supseteq &
\begin{array}{|l|}
\hline
\verb+_:s px ?x .+ \\
\hline
\end{array}
\\ \parallel &&\parallel && \parallel \\ 
\begin{array}{|l|l|l|}
  \hline
\verb+?x+  & \verb+?y+ & \verb+?z+ \\
  \hline
\verb+a+ & \verb+b+ & \verb+c+  \\ 
  \hline
\end{array} 
&&
\begin{array}{|l|l|l|l|}
  \hline
\verb+?x+  & \verb+?y+ & \verb+?z+ & \verb+_:s+ \\
  \hline
\verb+a+ & \verb+b+ & \verb+c+  & \verb+_:s1+ \\ 
  \hline
\end{array} 
&&
\begin{array}{|l|l|}
  \hline
\verb+?x+  & \verb+_:s+ \\
  \hline
\verb+a+ & \verb+_:s1+ \\ 
  \hline
\end{array} 
\\ \Downarrow && \Downarrow && \Downarrow \\ 
\begin{array}{|l|}
  \hline
\verb+a b c .+ \\
  \hline
\end{array}
& \subseteq &
\begin{array}{|l|}
  \hline
\verb+a b c .+ \\
\verb+_:s1 px a .+ \\
  \hline
\end{array}\begin{tabular}{|l|l|l|}
  \hline
\end{tabular}
& = &
\begin{array}{|l|}
  \hline
\verb+a b c .+ \\
\verb+_:s1 px a .+ \\
  \hline
\end{array}
\end{array}
$$

\medskip

Thus, the result of $Q$ over $G$ 
is the multiset of mappings with table: 
\begin{center}
\begin{minipage}{3pc}
\begin{Verbatim}[frame=single,label={Result},fontsize=\footnotesize,framerule=1.2pt]
?x
------
a
\end{Verbatim}
\end{minipage}
\end{center}

\end{example}

\begin{example}[UNION ALL]\
\label{ex:union-all}

This example has to be compared with Example~\ref{ex:union}.

\medskip

\hfill 
\begin{minipage}{4.5pc}
\begin{Verbatim}[frame=single,label={Data G},fontsize=\footnotesize,framerule=1.2pt]
a b c .
\end{Verbatim}
\end{minipage}
\hfill 
\begin{minipage}{18pc}
\begin{Verbatim}[frame=single,label={Query Q},fontsize=\footnotesize,framerule=1.2pt]
SELECT ?x 
WHERE { { ?x ?y ?z } UNION ALL { ?x ?y ?z } }
\end{Verbatim}
\end{minipage}
\hfill\null

As in Remark~\ref{rem:spa-query-union}
this means that the query $Q$ is equivalent to $Q_1$:

\smallskip

\hfill 
\begin{minipage}{18pc}
\begin{Verbatim}[frame=single,label={Query Q1},fontsize=\footnotesize,framerule=1.2pt]
SELECT ?x 
WHERE {
  { SELECT { ?x ?y ?z } WHERE { ?x ?y ?z } }
  UNION 
  { SELECT { ?x ?y ?z } WHERE { ?x ?y ?z } } }
\end{Verbatim}
\end{minipage}
\hfill\null

\medskip

Here the pattern $P=\select{\{?x ?y ?z\}}{\{?x ?y ?z\}}$ is not basic,
and we now check that in fact $P {\rm \;UNION\; }P $ is not equivalent to $P$. 

The following diagram illustrates the value of $P$ over $G$, 
then the value of $P$ over $\seg{P}{G}$ and 
finally their union as sets of mappings, which is the value of
$P {\rm \;UNION\; }P $ over $G$. 
The difference between $\sem{P}{G}$ and $\sem{P}{\seg{P}{G}}$
is that the value of \verb+_:s+ must be a \emph{fresh} blank,
so that once some blank, say \verb+_:s1+, is chosen for 
$\sem{P}{G}$ then another blank, say \verb+_:s2+, must be chosen for
$\sem{P}{\seg{P}{G}}$.

$$ 
\begin{array}{ccc}
\sem{P}{G}
&
\sem{P}{\seg{P}{G}}
&
\sem{P\mathit{ UNION }P}{G}
\\
&& \\
\begin{array}{|l|}
\hline
\verb+_:s px ?x .+ \\ 
\verb+_:s py ?y .+ \\ 
\verb+_:s pz ?z .+ \\ 
\hline
\end{array}
&
\begin{array}{|l|}
\hline
\verb+_:s px ?x .+ \\ 
\verb+_:s py ?y .+ \\ 
\verb+_:s pz ?z .+ \\ 
\hline
\end{array}
&
\begin{array}{|l|}
\hline
\verb+_:s px ?x .+ \\ 
\verb+_:s py ?y .+ \\ 
\verb+_:s pz ?z .+ \\ 
\hline
\end{array}
\\ 
\parallel & \parallel & \parallel
\\ 
\begin{array}{|l|l|l|l|}
  \hline
\verb+?x+ & \verb+?y+ & \verb+?z+ & \verb+_:s+ \\
  \hline
\verb+a+ & \verb+b+ & \verb+c+ & \verb+_:s1+ \\
  \hline
\end{array}
&
\begin{array}{|l|l|l|l|}
  \hline
\verb+?x+ & \verb+?y+ & \verb+?z+ & \verb+_:s+ \\
  \hline
\verb+a+ & \verb+b+ & \verb+c+ & \verb+_:s2+ \\
  \hline
\end{array}
&
\begin{array}{|l|l|l|l|}
  \hline
\verb+?x+ & \verb+?y+ & \verb+?z+ & \verb+_:s+ \\
  \hline
\verb+a+ & \verb+b+ & \verb+c+ & \verb+_:s1+ \\
\verb+a+ & \verb+b+ & \verb+c+ & \verb+_:s2+ \\
  \hline
\end{array}
\\
\Downarrow & \Downarrow & \Downarrow
\\ 
\begin{array}{|l|}
\hline
\verb+a b c .+ \\ 
\verb+_:s1 px a .+ \\ 
\verb+_:s1 py b .+ \\ 
\verb+_:s1 pz c .+ \\ 
  \hline
\end{array}
&
\begin{array}{|l|}
  \hline
\verb+a b c .+ \\ 
\verb+_:s1 px a . + \verb+_:s2 px a .+ \\ 
\verb+_:s1 py b . + \verb+_:s2 px b .+ \\ 
\verb+_:s1 pz c . + \verb+_:s2 px c .+ \\ 
  \hline
\end{array}
&
\begin{array}{|l|}
  \hline
\verb+a b c .+ \\ 
\verb+_:s1 px a . + \verb+_:s2 px a .+ \\ 
\verb+_:s1 py b . + \verb+_:s2 px b .+ \\ 
\verb+_:s1 pz c . + \verb+_:s2 px c .+ \\ 
  \hline
\end{array}
\\
\end{array}
$$

Finally, by projecting on \verb+?x+, 
the result of $Q$ over $G$ is the multiset of mappings with table: 

\begin{center}
\begin{minipage}{3pc}
\begin{Verbatim}[frame=single,label={Result},fontsize=\footnotesize,framerule=1.2pt]
?x
------
a
a
\end{Verbatim}
\end{minipage}
\end{center}

\end{example} 
\begin{example}[EXISTS]\
\label{ex:exists}

This example is based on Example~4.6 in \cite{KKC}. 
The query in \cite{KKC} is similar to the query $Q_0$ below,
however in the language $\spa$
this query is not syntactically valid since $\V(\mbox{BOUND}(?x))$
is not included in $\V(\{(?y, ?y ?y)\})$.
A valid query $Q$ is obtained by shifting braces. 

\medskip
  
\hfill 
\begin{minipage}{3.5pc}
\begin{Verbatim}[frame=single,label={Data G},fontsize=\footnotesize,framerule=1.2pt]
a a a .
\end{Verbatim}
\end{minipage}
\hfill 
\begin{minipage}{9pc}
\begin{Verbatim}[frame=single,label={Query Q0},fontsize=\footnotesize,framerule=1.2pt]
SELECT ?x 
WHERE
{ ?x ?x ?x
  FILTER EXISTS
  { ?y ?y ?y 
    FILTER BOUND(?x)
  }  
}
\end{Verbatim}
\end{minipage}
\hfill 
\begin{minipage}{13pc}
\begin{Verbatim}[frame=single,label={Query Q},fontsize=\footnotesize,framerule=1.2pt]
SELECT ?x 
WHERE
{ { ?x ?x ?x
    FILTER EXISTS { ?y ?y ?y }
  }
  FILTER BOUND(?x) 
}
\end{Verbatim}
\end{minipage}
\hfill\null

\smallskip

Thus $Q=\select{?x}{\{\,\{P_1 {\rm\;FILTER\; EXISTS\;} P_2\} {\rm\;FILTER\;}
{\rm\;BOUND\;} (?x)\,\}}$
with $P_1=\{(?x, ?x, ?x)\}$ and $P_2=\{(?y, ?y, ?y)\}$.
The unique mapping $m_1$ from $P_1$ to $G$ is such that $m_1(?x)=a$
and the unique mapping $m_2$ from $P_2$ to $G$ is such that $m_2(?y)=a$,
they are compatible, so that the value of
$P_1 {\rm\;FILTER\; EXISTS\;} P_2$ over $G$ is $\{m_1\}$.
Since $m_1$ binds $?x$ to $a$, the value of the expression 
${\rm BOUND\;} (?x)$ is $\true$, thus the value of $Q$ over $G$ is
$\sem{Q}{G}=\{m_1\}: \{(?x, ?x, ?x)\} \To \{(a,a,a)\} $.

\end{example} 

\begin{example}[subquery]\
\label{ex:sub}

This example is based on the example following Example~4.6 in \cite{KKC}.

\medskip
  
\hfill 
\begin{minipage}{3.5pc}
\begin{Verbatim}[frame=single,label={Data G},fontsize=\footnotesize,framerule=1.2pt]
a a a .
\end{Verbatim}
\end{minipage}
\hfill 
\begin{minipage}{14pc}
\begin{Verbatim}[frame=single,label={Query Q},fontsize=\footnotesize,framerule=1.2pt]
SELECT ?x 
WHERE
{ ?x ?x ?x
  FILTER EXISTS
  { ?y ?y ?y
    AND
    { SELECT ?x
      WHERE { ?x a ?y }
    }
  }
}
\end{Verbatim}
\end{minipage}
\hfill\null

\smallskip 
 
Here $Q = \select{?x}{P}$
with $P=P_1 \mbox{ FILTER EXISTS } P_2$
\\ and $P_2 = P_3 \mbox{ AND } P_4$ 
where $P_4$ is a select-query. 

Since $P_1$ is a basic pattern $\sem{P_1}{G}:P_1\To G$ is such that
$T(\sem{P_1}{G}) = \begin{tabular}{|l|}
  \hline
\verb+?x+  \\
  \hline
\verb+a+ \\ 
  \hline
\end{tabular} $

Similarly $\sem{P_3}{G}:P_3\To G$ is such that
$T(\sem{P_3}{G}) = \begin{tabular}{|l|}
  \hline
\verb+?y+  \\
  \hline
\verb+a+ \\ 
  \hline
\end{tabular} $

The value of query $P_4$ over $G$ is 
$\sem{P_4}{G}:\{(\blank{s},p_x,?x)\}\To G\cup\{(\blank{s_1},p_x,a)\}$
\\ such that (as in Example~\ref{ex:select}) 
$T(\sem{P_4}{G}) = \begin{tabular}{|l|l|}
  \hline
\verb+?x+ & \verb+_:s+  \\
  \hline
\verb+a+ & \verb+_:s1+  \\ 
  \hline
\end{tabular} $

Thus $\sem{P_2}{G}:P_2\To G$ is such that
$T(\sem{P_2}{G}) = \begin{tabular}{|l|l|l|}
  \hline
\verb+?x+ & \verb+?y+ & \verb+_:s+  \\
  \hline
\verb+a+ & \verb+a+ & \verb+_:s1+  \\ 
  \hline
\end{tabular} $

and finally $\sem{P}{G}:P_1 \To G$ is such that
$T(\sem{P}{G}) = \begin{tabular}{|l|}
  \hline
\verb+?x+ \\
  \hline
\verb+a+ \\ 
  \hline
\end{tabular} $

When this table is seen as a multiset of mappings, 
it is the result of $Q$ over $G$: 
\begin{center}
\begin{minipage}{3pc}
\begin{Verbatim}[frame=single,label={Result},fontsize=\footnotesize,framerule=1.2pt]
?x
------
a
\end{Verbatim}
\end{minipage}
\end{center}

\end{example} 

\begin{example}[assignment]\
\label{ex:bind}

This example is based on Example~5.4 in \cite{KKC}.

\medskip
  
\hfill 
\begin{minipage}{3.5pc}
\begin{Verbatim}[frame=single,label={Data G},fontsize=\footnotesize,framerule=1.2pt]
e a b .
e c d .
f f f .
\end{Verbatim}
\end{minipage}
\hfill 
\begin{minipage}{14pc}
\begin{Verbatim}[frame=single,label={Query Q},fontsize=\footnotesize,framerule=1.2pt]
SELECT ?x 
WHERE
{ ?x a b 
  FILTER EXISTS
  { ?x c d 
    AND 
    { ?y ?y ?y BIND ( ?y AS ?x )
    }
  }
} 
\end{Verbatim}
\end{minipage}
\hfill\null

\smallskip

Note that the query $Q$ is syntactically correct since
$?x\not\in \V(\{(?y,?y,?y)\})$. 
The main point in the evaluation of $Q$ over $G$ is that the subexpression
EXISTS $P$ evaluates to $\false$, as explained below. Then clearly the result
of the query is the empty multiset of mappings.
The unique mapping in $\sem{\{(?x,c,d)\}}{G}$ sends $?x$ to $e$.
The unique mapping in $\sem{\{(?y,?y,?y)\}}{G}$ sends $?y$ to $f$
then BIND ($?y$ AS $?x$) extends this mapping by sending $?x$ to $f$.
Thus the mappings are not compatible and the join is the empty
set of mappings, as required. 

\end{example} 

\setlength\parindent{1.5em}

\section{Conclusion}
\label{conclusion}

We proposed a core language GrAL close to SPARQL for which we
proposed a uniform semantics. This semantics allows one
to compose different queries and patterns regardless the different
forms of the queries. In this paper we did not include all
SPARQL query forms such as ASK or DESCRIBE, nor did we mention aggregates
or so. We intend to include such SPARQL features in a forthcoming report.
The proposed framework has been illustrated on RDF
graphs and SPARQL queries but it is tailored to fit any kind of graph
structures with a clear notion of graph homomorphism, see e.g., the different
structures mentioned in \cite{AnglesABHRV17}. Coming back to the title
of the paper, which might be a bit provocative, it emphasizes on a
feature of our semantics which makes it possible to encode easily any
SELECT or SELECT DISTINCT query as a CONSTRUCT query.

\bibliographystyle{plain}
\bibliography{biblio}

\end{document}